\newtheorem{theorem}{Theorem}[section]
\newtheorem{proposition}[theorem]{Proposition}
\newtheorem{lemma}[theorem]{Lemma}
\newtheorem*{proposition*}{Proposition}
\newtheorem{definition}[theorem]{Definition}
\theoremstyle{definition}
\newtheorem*{remark}{Remark}
\newcommand{\todoi}[2][] {\vspace{0.5em}\todo[inline, #1]{#2}}
\newcommand{\hide}[1]{\ifthenelse{\boolean{includeHidden}}{{\tiny\textbf{HIDDEN:~}#1}}{}}
 \def\newblock{\ }%
\newcommand{\x}{\mathbf{x}}
\newcommand{\y}{\mathbf{y}}
\newcommand{\p}{\mathbf{p}}
\newcommand{\bbf}{\mathbf{b}}
\newcommand{\e}{\mathbf{e}}
\DeclareMathOperator{\R}{\mathbb{R}}
\DeclareMathOperator{\conv}{conv}
\DeclareMathOperator*{\argmax}{arg\,max}
\def\tmk@labeldef#1,#2\@nil{%
  \def\tmk@label{#1}%
  \def\tmk@def{#2}%
}
\begin{document}

\title{Computing Candidate Prices in Budget-Constrained Product-Mix Auctions}

\author{Maximilian Fichtl}
\affil{Decision Sciences and Systems (DSS)\\
	Department of Informatics\\
	Technical University of Munich\\
	\texttt{fichtlm@in.tum.de}}

\maketitle

\begin{abstract}
	We study the problem of computing optimal prices for a version of the Product-Mix auction with budget constraints. In contrast to the ``standard'' Product-Mix auction, the objective is to maximize revenue instead of social welfare. We prove correctness of an algorithm proposed by Paul Klemperer and DotEcon which is sufficiently efficient in smaller markets.
\end{abstract}

\section{Introduction}
In this paper we consider the arctic version of the Product-Mix auction which was designed by Paul Klemperer for the Icelandic's government's strategy for exiting capital controls \citep{klemperer2010product}. The name ``arctic'' is chosen in contrast to the maybe more familiar version of the Product-Mix auction with quasi-linear utilities that can be analyzed with methods from tropical geometry \citep{baldwin2019understanding}. In contrast to many other auction formats, it allows bidders to express budget constraints. Additionally, the supply of goods is not assumed to be fixed, but depends on the seller's cost of providing a certain amount goods.

There are $n$ different goods $i \in \{1,\dots,n\}$ and a finite number of bidders, each submitting exactly one bid. A bid is a vector $\bbf = (b_1,\dots,b_n) \in \R^n_{\geq 0}$ together with a budget $\beta(\bbf) > 0$. Here, $b_i$ represents the value of bid $\bbf$ for one unit of good $i$. We identify bidders by their unique bid $\bbf$ and denote the set of all bids by $\mathcal{B}$. We assume that for each good $i \in \{1,\dots,n\}$, there is at least one bid with $b_i > 0$ -- otherwise we could just exclude good $i$ from the auction. A bundle is a vector $\x = (x_1,\dots,x_n) \in \R^n_{\geq 0}$ where $x_i$ denotes the possibly fractional amount of good $i$ contained in $\x$. An allocation is an assignment of bundles to bidders, where $\x^{\bbf}$ denotes the bundle assigned to bidder $\bbf \in \mathcal{B}$. The auctioneer selects a linear and anonymous price vector $\p = (p_1,\dots,p_n) \in \R^n_{> 0}$ and determines an allocation $\{\x^{\bbf}\}_{\bbf \in \mathcal{B}}$. Note that all coordinates of the price vector are strictly positive - a price of $0$ for some good would cause an infinite amount of that good to be demanded.  We say that an allocation is feasible, if $\sum_{i=1}^n p_i x^{\bbf}_i \leq \beta(\bbf)$ for all bidders $\bbf$, i.e., the cost of the assigned bundle does not exceed the available budget. A price vector $\p$ together with a feasible allocation $\{\x^{\bbf}\}_{\bbf \in \mathcal{B}}$ generates the utility
\[
\sum_{i=1}^n b_i x^{\bbf}_i - \sum_{i=1}^n p_i x^{\bbf}_i = \langle \bbf - \p, \x^{\bbf} \rangle
\]
for bidder $\bbf$. Given a price vector $\p$, the demand set of bidder $\bbf$ is the set of all bundles $\x$ maximizing their utility: 
\[
D^{\bbf}(\p) = \argmax \left\{ \langle \bbf-\p, \x \rangle \,:\, \x \in \R^n_{\geq 0} \text{ and } \langle \p, \x \rangle \leq \beta(\bbf) \right\}.
\]
We call an allocation $\{\x^{\bbf}\}_{\bbf \in \mathcal{B}}$ envy-free, if at given prices $\p$ bidders receive bundles in their respective demand sets, i.e., $x^{\bbf} \in D^{\bbf}(\p)\, \forall \bbf \in \mathcal{B}$.

The auctioneer must provide enough goods such that the selected allocation $\{\x^{\bbf}\}_{\bbf \in \mathcal{B}}$ can be realized. The cost of providing the required amount of goods is $\psi(\sum_{\bbf \in \mathcal{B}} \x^{\bbf})$. We assume $\psi: \R^n_{\geq 0} \rightarrow \R \cup \{+\infty\}$ to be a lower semi-continuous convex function with $\psi(\mathbf{0})=0$ which is monotonically increasing: if $\x \leq \y$ coordinate-wise, then $\psi(\x) \leq \psi(\y)$. The auctioneer's revenue for setting prices to $\p$, assigning bundles $\x^{\bbf}$ to the bidders and receiving their payment is then given by
\[
\sum_{\bbf \in \mathcal{B}} \langle \p, \x^{\bbf} \rangle - \psi\left(\sum_{\bbf \in \mathcal{B}} \x^{\bbf} \right).
\]

The budget-constrained Product-Mix auction tries to balance the interests of the seller and the bidders: the auctioneer's revenue should be maximized among all envy-free allocations:
\begin{align}\label{eqn:rev_max_basic}
\max & \sum_{\bbf \in \mathcal{B}} \langle \p, \x^{\bbf} \rangle - \psi\left(\sum_{\bbf \in \mathcal{B}} \x^{\bbf} \right) \\ \nonumber
\text{s.t.} & \, \x^{\bbf} \in D^{\bbf}(\p) \, \forall \bbf \in \mathcal{B} \\ \nonumber
& \, \p \in \R^n_{> 0}.
\end{align}

\section{Contribution}
Due to the non-linear constraints $\x^{\bbf} \in D^{\bbf}(\p)$, we may not expect existence of an efficient polynomial-time algorithm that finds a global optimal solution to Problem (\ref{eqn:rev_max_basic}). Paul Klemperer and DotEcon proposed an algorithm with good running time behavior when the number of different goods is relatively small (they assumed $n=3$ in their practical application). We state and analyze this algorithm formally. Our main contribution is a proof of correctness for their algorithm.

\section{The Algorithm}
The basic idea of the algorithm is as follows. Suppose we fix a price vector $\p \in \R^n_{> 0}$. In order to determine a revenue maximizing allocation at this specific price, we must solve
\begin{align}\label{eqn:fixed_p}
r(\p) = \max & \sum_{\bbf \in \mathcal{B}} \langle \p, \x^{\bbf} \rangle - \psi\left(\sum_{\bbf \in \mathcal{B}} \x^{\bbf} \right) \\ \nonumber
\text{s.t.} & \, \x^{\bbf} \in D^{\bbf}(\p) \, \forall \bbf \in \mathcal{B}.
\end{align}
Since the demand sets $D^{\bbf}(\p)$ are convex and compact and the objective function is concave, an optimal solution to this problem exists and can be determined efficiently. Problem (\ref{eqn:rev_max_basic}) can now be rewritten as $\max_{\p \in \R^n_{> 0}} r(\p)$. Our strategy is to determine a finite list of prices $\p^1,\dots,\p^m$ among which a maximizer of $r(\p)$ is guaranteed to be found. By computing $r(\p^j)$ for all these prices, we find a global optimal solution to Problem (\ref{eqn:rev_max_basic}). The \emph{Filtered Prices} Algorithm \ref{alg:filtered_prices} iterates over all tuples $(\bbf^1,\dots,\bbf^n) \in \mathcal{B}^n$ of bids and permutations $\sigma$ and calls the generate\_candidate Algorithm \ref{alg:generate_candidate} for all these inputs, which either returns a candidate price, or returns INFEASIBLE. Whenever the latter occurs, this allows us to ignore certain permutations as inputs to Algorithm \ref{alg:generate_candidate}, reducing the number of iterations of the Filtered Prices Algorithm. Our main result is that the list of prices generated by Algorithm \ref{alg:filtered_prices} contains a globally optimal solution to Problem (\ref{eqn:rev_max_basic}).

\begin{theorem}\label{thm:main}
Let $P$ be the set of prices returned by Algorithm \ref{alg:filtered_prices}. Then $P$ contains a price $\p^*$ which is an optimal solution to Problem (\ref{eqn:rev_max_basic}). Consequently, by evaluating $r(\p)$ for all $\p \in P$, we solve Problem (\ref{eqn:rev_max_basic}) globally.
\end{theorem}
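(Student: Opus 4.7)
The plan is to take an arbitrary optimal solution $(\p^\star, \{\x^{\bbf,\star}\}_{\bbf \in \mathcal{B}})$ of Problem (\ref{eqn:rev_max_basic}) and exhibit a tuple $(\bbf^1,\dots,\bbf^n) \in \mathcal{B}^n$ together with a permutation $\sigma$ such that Algorithm \ref{alg:generate_candidate} (\texttt{generate\_candidate}), called on this input, does not return INFEASIBLE and outputs a price $\p \in P$ satisfying $r(\p) = r(\p^\star)$. Since Algorithm \ref{alg:filtered_prices} enumerates all such inputs, this will prove the theorem.

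The first step is to derive a structural characterization of optimal prices. A bidder $\bbf$'s demand set $D^{\bbf}(\p)$ is determined by which goods maximize the ratio $(b_i - p_i)/p_i$ and whether this maximum is nonnegative, so the combinatorial type of the family $\{D^{\bbf}(\p)\}_{\bbf \in \mathcal{B}}$ changes only when $\p$ crosses one of the finitely many indifference hyperplanes $b_i p_j = b_j p_i$ or activation hyperplanes $p_i = b_i$ associated with bids in $\mathcal{B}$. Inside a full-dimensional cell of this arrangement, Problem (\ref{eqn:fixed_p}) reduces to maximizing a concave objective over a fixed polytope, so $\p \mapsto r(\p)$ is smooth there and its gradient can be computed via LP sensitivity analysis in terms of the active budget multipliers and a subgradient of $\psi$. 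The central claim to establish is that some maximizer of $r$ lies at a vertex of the arrangement, i.e.\ at the intersection of $n$ of the hyperplanes above, each supplied by a specific bid; otherwise one could improve $r$ by perturbing $\p$ along a direction in which the relevant gradient is nonzero, contradicting optimality.

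The second step is to translate these $n$ active hyperplane conditions into the input $(\bbf^1,\dots,\bbf^n,\sigma)$ expected by \texttt{generate\_candidate}: the bid contributing the $i$-th condition becomes $\bbf^i$, and $\sigma$ records which goods that bid is indifferent between (or which good is just activated) at the optimum. Because the chosen optimum solves this system of $n$ equations by construction, the internal feasibility checks inside \texttt{generate\_candidate} succeed and the algorithm outputs a price attaining the optimal revenue $r(\p^\star)$.

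The main obstacle will be degeneracies: the chosen $\p^\star$ may satisfy strictly more than $n$ active conditions, may have $b_i = p_i^\star$ for some bidder (discontinuously enlarging $D^{\bbf}(\p^\star)$), or may induce a supply at which $\psi$ is non-differentiable, so that the subdifferential of $\psi$ contributes a cone of feasible multipliers rather than a single gradient. To deal with these cases I plan a perturbation argument on the bid values, budgets, and $\psi$ to reduce to a generic instance in which exactly $n$ conditions bind, followed by a limit argument using lower semi-continuity of $\psi$ together with upper hemi-continuity of the demand correspondence to transfer the conclusion back to the original instance. This degeneracy analysis is where most of the technical work will live; once it is settled, the enumeration performed by Algorithm \ref{alg:filtered_prices} yields the result immediately.
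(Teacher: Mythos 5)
Your high-level skeleton matches the paper's: reduce to showing that some optimal price lies at the intersection of $n$ linearly independent indifference hyperplanes (the paper's Proposition \ref{prop:price_at_intersection}), then show that every such intersection point is produced by some input to Algorithm \ref{alg:generate_candidate} (Proposition \ref{prop:main}). But both halves of your plan have genuine gaps. For the first half, your argument is: inside a cell of the arrangement $r$ is smooth, so if the gradient is nonzero you can improve, hence a maximizer sits at a vertex. This fails when the gradient vanishes at an interior point of a cell --- then there is no contradiction and no vertex is produced; note that $r$ restricted to a cell is not linear in $\p$ (the demand-set vertices are $(\beta(\bbf)/p_i)\e_i$, which depend on $\p$ through $1/p_i$), so an interior critical point cannot be ruled out a priori. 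Moreover, even if you reach a vertex of the arrangement of \emph{equations} $b_i p_j = b_j p_i$, you still must verify the inequalities $b_i p_k \geq b_k p_i$ for all $k$ that make each equation a genuine indifference hyperplane $H(\bbf,I)$ in the sense needed by Proposition \ref{prop:main}; your plan never addresses this. The paper avoids all of this, including your entire proposed degeneracy/perturbation/limit analysis, with a monotone argument: raise prices within the polytope of prices at which every bidder's demanded set weakly grows (Lemmas \ref{lem:all_goods_demanded} and \ref{lem:price_order} show revenue cannot decrease, since payments are unchanged and the supplied bundle shrinks), and the coordinate-wise maximal point of that polytope is forced to lie on $n$ linearly independent indifference equalities. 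No smoothness, no subgradients of $\psi$, and no perturbation of the instance are needed.

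For the second half, saying that ``$\sigma$ records which goods each bid is indifferent between'' skips the actual combinatorial content. Algorithm \ref{alg:generate_candidate} determines $p_{\sigma(k)}$ from $C^k_k = \min_{m<k} p_{\sigma(m)}/b^k_{\sigma(m)}$, so to reconstruct $\p$ you must (a) order the hyperplanes so that each one links exactly one new good to a good already priced, starting from the dummy good $0$ --- this is Lemma \ref{lem:hyperplane_ordering} and it is exactly where linear independence and positivity of $\p$ are used --- and (b) argue that the minimum defining $C^k_k$ is attained at the index $\sigma(l)$ appearing in the $k$-th defining equation, which holds precisely because $\p \in H(\bbf^k,I^k)$ forces $b^k_{\sigma(l)}/p_{\sigma(l)}$ to be maximal. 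Finally, your proposal does not address the filtering in Algorithm \ref{alg:filtered_prices}: one must check that the permutation you need was not removed from $S$ by an earlier INFEASIBLE return, which the paper does by observing that any removed prefix would also force INFEASIBLE on your permutation. Without these three points the enumeration argument does not close.
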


Algorithm \ref{alg:filtered_prices} is based on the insight that we can restrict our search for possible maximizing prices to points where bidders are indifferent between multiple goods. For the ease of notation, we introduce a zeroth dummy good, corresponding to the empty bundle. We add an additional zeroth entry to prices $\p \in \R^n_{> 0}$ by setting $p_0 = 1$ and do the same for all bids $\bbf \in \mathcal{B}$ by also setting $b_0 = 1$. For $i \in \{1,\dots,n\}$, we write $\e_i$ for the $i$-th standard vector and set $\e_0 = \mathbf{0} \in \R^n$.
\begin{lemma}\label{lem:demand_set}
	Let $\p \in \R^n_{> 0}$ be a price vector and $\bbf \in \mathcal{B}$ a bid. Then
	\[
	D^{\bbf}(\p) = \conv \left\{ (\beta(\bbf)/p_i)\e_i \,:\, i \in \{0,\dots,n\} \text{ and } b_i p_j \geq b_j p_i \, \forall j\in \{0,\dots,n\} \right\}.
	\]
\end{lemma}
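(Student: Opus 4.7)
My plan is to recognize the demand set as the solution of a linear program over a compact polytope and then read off its vertex structure.

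First I would fix a bidder $\bbf$ and prices $\p \in \R^n_{>0}$, and observe that the demand set is the argmax of the linear function $\x \mapsto \langle \bbf - \p, \x \rangle$ over the budget simplex
\[
F = \left\{ \x \in \R^n_{\geq 0} : \langle \p, \x \rangle \leq \beta(\bbf) \right\}.
\]
Because all $p_i > 0$, $F$ is a compact simplex with vertex set $V = \{\mathbf{0}\} \cup \{(\beta(\bbf)/p_i)\e_i : i=1,\dots,n\}$; using the dummy convention $p_0 = 1$ and $\e_0 = \mathbf{0}$, this can be written uniformly as $V = \{(\beta(\bbf)/p_i)\e_i : i = 0,\dots,n\}$. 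Since the objective is linear and $F$ is a polytope, the argmax is itself a face of $F$, hence equal to the convex hull of those vertices achieving the maximum objective value.

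Next I would compute the objective value at each vertex. At $(\beta(\bbf)/p_i)\e_i$ for $i \geq 1$ it equals $\beta(\bbf)(b_i - p_i)/p_i = \beta(\bbf)(b_i/p_i - 1)$, and at $\mathbf{0}$ it equals $0$. Using $b_0 = p_0 = 1$, all these values can be written uniformly as $\beta(\bbf)(b_i/p_i - 1)$ for $i \in \{0,\dots,n\}$, because the $i=0$ case gives $0$. Thus vertex $i$ belongs to the argmax exactly when $b_i/p_i \geq b_j/p_j$ for every $j \in \{0,\dots,n\}$, which after clearing denominators (all $p_j > 0$) is the stated condition $b_i p_j \geq b_j p_i$ for every $j \in \{0,\dots,n\}$.

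Combining the two steps yields
\[
D^{\bbf}(\p) = \conv\left\{ (\beta(\bbf)/p_i)\e_i : i \in \{0,\dots,n\},\ b_i p_j \geq b_j p_i\ \forall j \in \{0,\dots,n\} \right\},
\]
which is the claim. The argument has no real obstacle; the only delicate point is the bookkeeping around the dummy zeroth coordinate, which must be set up so that the empty-bundle vertex $\mathbf{0}$ and the inequality $p_j \geq b_j$ (i.e., the bidder prefers not buying good $j$) are absorbed into a single condition indexed by $j \in \{0,\dots,n\}$ rather than being treated as a separate case.
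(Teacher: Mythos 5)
Your proof is correct and is exactly the direct argument the paper has in mind (the paper omits the proof, noting it ``directly follows from the definition''): the demand set is the optimal face of the linear program over the budget simplex, whose vertices are $\mathbf{0}$ and $(\beta(\bbf)/p_i)\e_i$, and comparing the objective values $\beta(\bbf)(b_i/p_i-1)$ gives precisely the stated condition. The bookkeeping with the dummy good $0$ (via $b_0=p_0=1$ and $\e_0=\mathbf{0}$) is handled as the paper intends.
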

The proof directly follows from the definition of the demand set and is omitted. We say that bidder $\bbf$ demands good $i \in \{0,\dots,n\}$ at prices $\p$, if $(\beta(\bbf)/p_i)\e_i \in D^{\bbf}(\p)$. Equivalently, good $i$ is demanded if and only if $b_i p_j \geq b_j p_i$ for all $j$. Given two distinct goods $i,j \in \{0,\dots,n\}$, we consider the region in price space where a bidder $\bbf$ is indifferent between these two goods, i.e., demands both of them. 
\begin{definition}
	A pair $(\bbf,I)$, where $\bbf$ is a bid and $I=\{i,j\} \subset \{0,\dots,n\}$ with $|I|=2$ is an unordered pair of indices, induces the \emph{indifference hyperplane}
	\[
	H(\bbf, I) = \{\p \in \R^n_{> 0} \,:\, b_i p_j = b_j p_i \text{ and } b_i p_k \geq b_k p_i \,\forall k=0,\dots,n\}.
	\]
	
	We say that a collection $(\bbf^1,I^1), \dots, (\bbf^m,I^m)$ of indifference hyperplanes is \emph{linearly independent}, if their defining linear equations $b_i p_j = b_j p_i$ are linearly independent (where we interpret $p_0$ as the constant $1$, not as  a variable).
\end{definition}
\begin{figure}[t]
	\centering
	\includegraphics[width=0.5\textwidth]{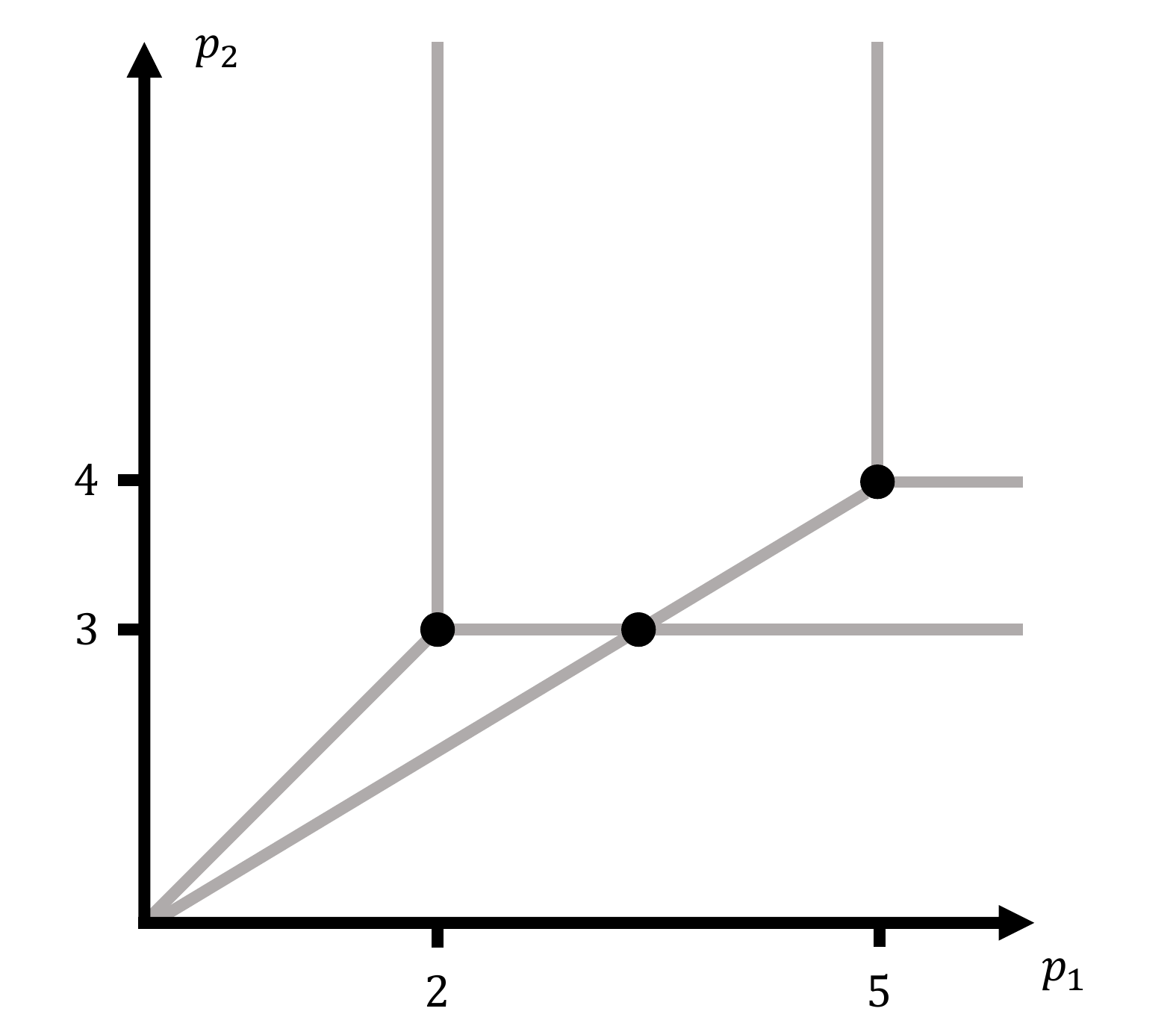}
	\caption{Indifference hyperplanes (gray) of bids at $(2,3)$ and $(5,4)$. Candidates for optimal prices are indicated by black dots.}
	\label{fig:price_space}
\end{figure}
The following proposition states that we can restrict our search for candidate prices to intersections of $n$ linearly independent indifference hyperplanes (see Figure \ref{fig:price_space}).
\begin{proposition}\label{prop:price_at_intersection}
	There is an optimal price vector $\p^*$ for Problem \ref{eqn:rev_max_basic} which is the (necessarily unique) intersection point of $n$ linearly independent indifference hyperplanes: $\{\p^*\} = \bigcap_{i=1}^n H(\bbf^i,I^i)$.
\end{proposition}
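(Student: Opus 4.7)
I will prove the proposition via an iterative argument: starting from an arbitrary optimal price $\p^*$, I will show that $\p^*$ can be replaced by another optimal price lying on an additional linearly independent indifference hyperplane; iterating until $n$ such hyperplanes are accumulated yields the desired optimum.

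\textbf{Structural observations and the base case.} Within an open chamber of the indifference hyperplane arrangement, Lemma \ref{lem:demand_set} forces each bidder $\bbf$ to uniquely demand one good $i^*(\bbf) \in \{0,1,\dots,n\}$, so the revenue simplifies to $r(\p) = \sum_{\bbf : i^*(\bbf)\neq 0} \beta(\bbf) - \psi(\mathbf{Z}(\p))$ with $\mathbf{Z}_j(\p) = \sum_{\bbf : i^*(\bbf)=j} \beta(\bbf)/p_j$. Monotonicity of $\psi$ implies $r$ is non-decreasing in each $p_j$ within the chamber, while in the variables $q_j = 1/p_j$, $r$ is concave (since $\psi$ is convex monotone and $\mathbf{Z}$ is linear in $\mathbf{q}$). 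I will deduce that the supremum over the chamber is attained on its boundary, giving an optimal price on at least one indifference hyperplane. Edge cases in which a chamber extends to infinity in some coordinate direction can be handled by noting that if no bidder demands good $j$ in the chamber, then $r$ is constant in $p_j$ there and we can move in the opposite direction instead.

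\textbf{Iterative step.} Now suppose $\p^*$ lies on $k < n$ linearly independent indifference hyperplanes $H_1, \dots, H_k$ with intersection $L$ of dimension $n-k \geq 1$. In a neighborhood of $\p^*$ in $L$, the combinatorial type of the demand sets is fixed. Parameterizing each indifferent bidder's allocation by simplex weights $\lambda$ over the extreme points of the demand set, the revenue becomes $r(\p,\lambda) = P(\lambda) - \psi(\mathbf{Z}(\p,\lambda))$. Crucially, the identity $\langle \p, (\beta(\bbf)/p_i)\e_i\rangle = \beta(\bbf)$ for $i \neq 0$ shows that the total payment $P(\lambda)$ depends only on $\lambda$ (not on $\p$), and the total bundle satisfies $\mathbf{Z}_j(\p,\lambda) = C_j(\lambda)/p_j$ for nonnegative linear functions $C_j$. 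In the variables $\mathbf{q}$, $\mathbf{Z}$ is bilinear in $(\mathbf{q},\lambda)$, so $r(\cdot,\lambda)$ is concave in $\mathbf{q}$ for each fixed $\lambda$ and $r$ is non-decreasing in each $p_j$. Emulating the chamber argument on $L$, I will construct a path from $\p^*$ along $L$, with coordinated adjustment of $\lambda$, that preserves revenue until it crosses an additional indifference hyperplane. The endpoint provides an optimal price lying on $k+1$ linearly independent hyperplanes, closing the induction.

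\textbf{Main obstacle.} The hardest part will be the rigorous execution of the iterative step. Since the envelope $\max_\lambda r(\cdot,\lambda)$ need not be concave even though each slice $r(\cdot,\lambda)$ is, a standard concave-maximization-at-an-extreme-point argument does not directly apply to the cell of $L$ containing $\p^*$. I will have to carefully combine the concavity of $r(\cdot,\lambda^*)$ in $\mathbf{q}$, the monotonicity of $r$ in each $p_j$, and the bilinear dependence of $\mathbf{Z}$ on $(\mathbf{q},\lambda)$ to guarantee the existence of a suitable revenue-preserving direction, performing a dimension count against the simplex constraints on $\lambda$ to confirm feasibility for an open interval of time. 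Separate treatment will be needed for the edge cases in which an indifference set contains the dummy good $0$ (so that the bidder's payment itself depends on $\lambda$) or $\lambda^*$ sits on the boundary of the simplex, restricting admissible perturbations.
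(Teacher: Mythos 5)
Your overall strategy (prices can be pushed onto indifference hyperplanes one at a time without losing revenue) is in the right spirit, but the proposal is an outline rather than a proof, and the decisive step is missing. You yourself flag the inductive step as the ``main obstacle'' and then only list the ingredients you \emph{would} combine --- concavity in $\mathbf{q}$, monotonicity in $p_j$, a dimension count against the simplex constraints --- without actually producing the revenue-preserving direction along $L$ or showing it terminates on a \emph{new, linearly independent} hyperplane. That existence claim is precisely the content of the proposition, so as written there is a genuine gap. Two further problems compound it. First, you start ``from an arbitrary optimal price $\p^*$,'' but existence of a maximizer of $r$ over the open unbounded set $\R^n_{>0}$ is not given in advance; it is a \emph{consequence} of the proposition (finitely many candidate intersection points), so the argument must be phrased as ``for every $\bar\p$ there is a $\p^*$ on more hyperplanes with $r(\p^*)\ge r(\bar\p)$,'' as the paper does. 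Second, $H(\bbf,I)$ is not the full affine hyperplane $\{b_ip_j=b_jp_i\}$: it carries the side conditions $b_ip_k\ge b_kp_i$ for all $k$. Moving along the affine intersection $L$ can exit $H(\bbf^l,I^l)$ even while all $k$ equalities continue to hold, so ``staying on $L$'' does not by itself preserve the first $k$ indifferences in the sense required; your neighborhood-of-fixed-combinatorial-type claim also fails when $\p^*$ sits on additional (possibly dependent) hyperplanes.

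It is worth noting that the obstacle you identify (the envelope $\max_\lambda r(\cdot,\lambda)$ need not be concave) dissolves under the paper's approach, which avoids induction entirely. The key observation is your own remark that $\langle\p,(\beta(\bbf)/p_i)\e_i\rangle=\beta(\bbf)$ for $i\ne 0$: if $\p\ge\bar\p$ coordinate-wise and every bidder's set of demanded goods weakly grows, then reusing the \emph{same} convex weights $\lambda$ at the new prices gives a feasible allocation with identical payments and a coordinate-wise smaller aggregate bundle, hence (by monotonicity of $\psi$) weakly larger revenue --- no re-optimization of $\lambda$ and no concavity needed. The paper then takes a single global step: the set $P=\{\p:\,G^{\bbf}(\p)\supseteq G^{\bbf}(\bar\p)\ \forall\bbf,\ p_i\ge\bar p_i/2\}$ is a compact polytope closed under coordinate-wise maxima, its top element is the unique maximizer of $\sum_i p_i$ over $P$ and hence a vertex at which $n$ linearly independent constraints are tight, and since that vertex dominates $\bar\p$ the tight constraints are all genuine indifference equalities (with the side inequalities built into the definition of $P$). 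If you want to salvage your induction, you should replace the search for a ``revenue-preserving direction with coordinated $\lambda$-adjustment'' by this fixed-$\lambda$ monotone rescaling, and restrict attention to directions in which no bidder's demanded set shrinks.
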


A proof can be found in the Appendix. With Proposition \ref{prop:price_at_intersection} at hand, we will show that for each intersection point of $n$ linearly independent indifference hyperplanes, there is an input for the generate\_candidate Algorithm \ref{alg:generate_candidate} such that this intersection point is returned.

The next lemma provides an insight into the structure of linearly independent indifference hyperplanes.
\begin{algorithm}[t]
	\label{alg:generate_candidate}
	\KwIn{Bids $(\bbf^1,\dots,\bbf^n) \in \mathcal{B}^n$, permutation $\sigma \in S_n$.}
	\KwOut{Price vector $\p$, or INFEASIBLE}
	Set $C_1^1, \dots, C^n_1 = 1$\;
	\For{$k=1,\dots,n$}{
		Set $p_{\sigma(k)} = C_{k}^k b^k_{\sigma(k)}$\;
		\If{there exists $m < k$ with $b^m_{\sigma(m)}p_{\sigma(k)} < b^m_{\sigma(k)}p_{\sigma(m)}$ \label{algstep:is_feasible}} {
			\textbf{return} INFEASIBLE \label{algstep:returninf}\;
		}
		\For{$l=k+1,\dots,n$ \label{algstep:update_C_start}}{
			\eIf{$p_{\sigma(k)} < C^l_{k} b^l_{\sigma(k)}$}{
				Set $C^l_{k+1} = p_{\sigma(k)}/b^l_{\sigma(k)}$\;
			}{
				Set $C^l_{k+1} = C^l_{k}$\; \label{algstep:update_C_end}
			}
		}
	}
	\textbf{return} $\p$\;
	\caption{generate\_candidate}
\end{algorithm}

\begin{lemma}\label{lem:hyperplane_ordering}
	Let $(\bbf^1,I^1),\dots,(\bbf^n,I^n)$ be $n$ linearly independent indifference hyperplanes intersecting in a price $\p \in \R^n_{> 0}$. Let $\mathcal{I} \subset \{0,\dots,n\}$ with $0 \in \mathcal{I}$ and $|\mathcal{I}| < n+1$. Then there exists some $l$ with $|I^l \cap \mathcal{I}| = 1$.
\end{lemma}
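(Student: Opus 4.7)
The plan is to argue by contradiction. Assume that $|I^l \cap \mathcal{I}| \in \{0,2\}$ for every $l \in \{1, \dots, n\}$, and partition the indices accordingly into
\[
L_1 = \{l : I^l \subseteq \mathcal{I}\}, \qquad L_2 = \{l : I^l \cap \mathcal{I} = \emptyset\},
\]
setting $k := |\mathcal{I}|$ (so $1 \le k \le n$, since $0 \in \mathcal{I}$ and $|\mathcal{I}| < n+1$).

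The first---and conceptually hardest---step is to observe that this partition separates the defining equations into two families that act on complementary coordinate blocks. Rewriting $b^l_{i^l} p_{j^l} - b^l_{j^l} p_{i^l} = 0$ (with $I^l = \{i^l, j^l\}$, and $p_0$ treated as a variable for the purpose of reading off coefficients) as a linear functional on $\R^{n+1}$, one sees that its coefficient vector $a^{(l)}$ is supported on $\{i^l, j^l\}$. Hence for $l \in L_1$ one has $a^{(l)} \in V_{\mathcal{I}} := \operatorname{span}\{\e_i : i \in \mathcal{I}\}$, of dimension $k$, while for $l \in L_2$ one has $a^{(l)} \in V_{\mathcal{I}^c} := \operatorname{span}\{\e_i : i \notin \mathcal{I}\}$, of dimension $n+1-k$. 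Because these subspaces intersect trivially, linear independence of the full family of $n$ hyperplanes forces $|L_1| \le k$ and $|L_2| \le n+1-k$ separately.

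Next I would case-split on $|L_2|$, exploiting $|L_1| + |L_2| = n$. If $|L_2| = n+1-k$, the family $\{a^{(l)}\}_{l \in L_2}$ spans $V_{\mathcal{I}^c}$, so the extended solution $(1, p_1, \dots, p_n)$ must lie in $V_{\mathcal{I}^c}^\perp$; this forces $p_i = 0$ for every $i \notin \mathcal{I}$, and since $0 \in \mathcal{I}$ and $k \le n$ there is at least one such $i$ in $\{1,\dots,n\}$, contradicting $\p \in \R^n_{>0}$. Otherwise $|L_2| \le n-k$, forcing $|L_1| = k$ and hence $\{a^{(l)}\}_{l \in L_1}$ to span $V_{\mathcal{I}}$; the extended solution must lie in $V_\mathcal{I}^\perp$, forcing $p_i = 0$ for every $i \in \mathcal{I}$. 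In particular $p_0 = 0$, contradicting the normalization $p_0 = 1$.

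The reason the hypothesis $0 \in \mathcal{I}$ is indispensable is precisely the asymmetry it introduces between the two cases: it places the ``inhomogeneous'' coordinate $\e_0$ into $V_\mathcal{I}$, so spanning $V_\mathcal{I}$ conflicts with $p_0 = 1$, while simultaneously ensuring that $V_{\mathcal{I}^c}$ involves only coordinates required to be strictly positive, so spanning $V_{\mathcal{I}^c}$ conflicts with $\p > 0$. Once this structural observation is in place, the remainder of the proof is a straightforward dimension count and I do not expect any further obstacles.
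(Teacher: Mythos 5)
Your proof is correct and rests on the same idea as the paper's: assume every $I^l$ lies entirely inside $\mathcal{I}$ or entirely inside $\mathcal{I}^c$, count dimensions of the supports of the defining equations, and conclude that the equations supported on $\mathcal{I}^c$ form a full-rank homogeneous square system forcing $p_i=0$ there, contradicting $\p\in\R^n_{>0}$. The only difference is bookkeeping: the paper keeps $p_0$ as the constant $1$ throughout, which gives $|L_1|\leq|\mathcal{I}|-1$ directly and eliminates your second case, whereas your homogenized version in $\R^{n+1}$ needs the extra case ruled out by $p_0=1\neq 0$.
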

\begin{proof}
	Assume this is not the case. Then for any indifference hyperplane $(\bbf^l,I^l)$, either both indices of the variables of the defining linear equation are in $\mathcal{I}$, or none of them. So by linear independence, there must be equations with both variables in $\mathcal{I}^c$ and those determine the values $p_i$ for $i \in \mathcal{I}^c$ uniquely. These equations are all of the form $b^l_i p_j = b^l_j p_i$ for $i,j \neq 0$ and some $l$. One and thus the unique solution to these equations is $p_i = 0$ for $i \in \mathcal{I}^c$, contradicting the positivity of $\p$.
\end{proof}

The variables $C^l_k$ in Algorithm \ref{alg:generate_candidate} have a simple intuitive meaning, as described in the next lemma. The proof only consists of inspecting steps \ref{algstep:update_C_start}-\ref{algstep:update_C_end} of the algorithm, so we omit it.

\begin{lemma}\label{lem:factor_values}
 For $1 \leq k \leq l \leq n$ we have that $C^l_{k} = \min_{m=0,\dots,k-1} p_{\sigma(m)}/b^l_{\sigma(m)}$, where we set $\sigma(0)=0$.
\end{lemma}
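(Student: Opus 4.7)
The plan is a simple induction on $k$, using the convention that the dummy good carries $p_0 = 1$ and $b^l_0 = 1$ for every $l$.

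For the base case $k=1$, the algorithm initializes $C^l_1 = 1$ for all $l$. Under the convention $\sigma(0)=0$, the right-hand side reads $\min_{m=0,\dots,0} p_{\sigma(m)}/b^l_{\sigma(m)} = p_0/b^l_0 = 1/1 = 1$, so the claim holds.

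For the inductive step, I would first rewrite the update in steps \ref{algstep:update_C_start}--\ref{algstep:update_C_end} in the compact form
\[
C^l_{k+1} = \min\bigl\{C^l_{k},\; p_{\sigma(k)}/b^l_{\sigma(k)}\bigr\}
\]
for $l > k$, which is exactly what the if/else branches compute. Applying the induction hypothesis $C^l_k = \min_{m=0,\dots,k-1} p_{\sigma(m)}/b^l_{\sigma(m)}$ and absorbing the extra term $m=k$ into the minimum immediately yields
\[
C^l_{k+1} = \min_{m=0,\dots,k} p_{\sigma(m)}/b^l_{\sigma(m)},
\]
completing the induction.

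There is no real obstacle: the argument is purely bookkeeping, and the only point worth flagging is that the base case quietly relies on the dummy-good convention $p_0 = b^l_0 = 1$ so that the initialization $C^l_1 = 1$ matches the stated formula. Everything else is a mechanical unfolding of the if/else branch into a $\min$.
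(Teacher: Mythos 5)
Your induction is correct and is precisely the ``inspection of steps \ref{algstep:update_C_start}--\ref{algstep:update_C_end}'' that the paper explicitly omits: rewriting the if/else branch as $C^l_{k+1}=\min\{C^l_k,\,p_{\sigma(k)}/b^l_{\sigma(k)}\}$ and unfolding, with the base case $C^l_1=1=p_0/b^l_0$ resting on the dummy-good convention exactly as you flag. Nothing is missing.
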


We are now in a position to prove our main result: if a price vector $\p$ is an intersection of $n$ linearly independent indifference hyperplanes, then there is an input to the generate\_candidate Algorithm \ref{alg:generate_candidate} such that $\p$ is returned.
\begin{proposition}\label{prop:main}
	For each price vector $\p \in \R^n_{> 0}$ that is the intersection point of $n$ linearly independent indifference hyperplanes $(\bbf^1,I^1), \dots, (\bbf^n,I^n)$, there is a reordering of the bids $(\bbf^1,\dots,\bbf^n)$ and a permutation $\sigma \in S_n$ such that Algorithm $1$ returns $\p$.
\end{proposition}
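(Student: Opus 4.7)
The plan is to construct the bid reordering and permutation $\sigma$ iteratively using Lemma \ref{lem:hyperplane_ordering}, and then verify by induction on $k$ that Algorithm \ref{alg:generate_candidate} reproduces $\p$ coordinate by coordinate without triggering the INFEASIBLE check at step \ref{algstep:is_feasible}.

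For the construction, set $\sigma(0) = 0$ and $\mathcal{I}_0 = \{0\}$, and maintain the invariant that after step $k-1$ the reindexed hyperplanes $(\bbf^1, I^1), \ldots, (\bbf^{k-1}, I^{k-1})$ satisfy $I^j = \{\sigma(j), \sigma(m_j)\}$ for some $m_j \in \{0, \ldots, j-1\}$. At step $k$ one has $|\mathcal{I}_{k-1}| = k \leq n < n+1$, so Lemma \ref{lem:hyperplane_ordering} supplies an index $l$ with $|I^l \cap \mathcal{I}_{k-1}| = 1$. By the invariant, every already-chosen hyperplane $(\bbf^j, I^j)$ with $j < k$ has $I^j \subseteq \mathcal{I}_{k-1}$, so $|I^j \cap \mathcal{I}_{k-1}| = 2$; hence $l$ must correspond to a fresh hyperplane. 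Relabel it as the $k$th input bid and set $\sigma(k)$ equal to the unique element of $I^k \setminus \mathcal{I}_{k-1}$, which lies in $\{1,\ldots,n\}$ because $0 \in \mathcal{I}_{k-1}$. This preserves the invariant and yields a valid $\sigma \in S_n$ after $n$ steps.

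With this input, I would verify inductively that for each $k$ both (i) the INFEASIBLE check passes and (ii) the algorithm's assignment agrees with the true coordinate $p_{\sigma(k)}$ of $\p$. For (i), since $\sigma(m) \in I^m$, Lemma \ref{lem:demand_set} shows that $\bbf^m$ demands good $\sigma(m)$ at $\p$, giving $p_{\sigma(m)}/b^m_{\sigma(m)} \leq p_{\sigma(k)}/b^m_{\sigma(k)}$, which rearranges exactly to the condition $b^m_{\sigma(m)} p_{\sigma(k)} \geq b^m_{\sigma(k)} p_{\sigma(m)}$ required to avoid step \ref{algstep:returninf}. For (ii), Lemma \ref{lem:factor_values} together with the inductive hypothesis on the previously assigned coordinates shows that the algorithm sets $p_{\sigma(k)}$ to $b^k_{\sigma(k)} \cdot \min_{m=0,\ldots,k-1} p_{\sigma(m)}/b^k_{\sigma(m)}$; writing $I^k = \{\sigma(k), \sigma(m^*)\}$ for the unique $m^* < k$, the facts that $\bbf^k$ also demands $\sigma(m^*)$ and satisfies the indifference relation $b^k_{\sigma(k)} p_{\sigma(m^*)} = b^k_{\sigma(m^*)} p_{\sigma(k)}$ imply that this minimum is attained at $m^*$ and equals $p_{\sigma(k)}/b^k_{\sigma(k)}$, so the algorithm indeed outputs the true $p_{\sigma(k)}$.

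The main obstacle is the iterative construction: at each step one needs a not-yet-used hyperplane from Lemma \ref{lem:hyperplane_ordering}. This is exactly guaranteed by the invariant, which forces each previously selected hyperplane to have both indices inside $\mathcal{I}_{k-1}$, so the lemma's output cannot be one of them. Once the construction is in place, the algorithmic verification reduces to routine applications of Lemmas \ref{lem:demand_set} and \ref{lem:factor_values}.
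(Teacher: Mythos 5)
Your proposal is correct and follows essentially the same route as the paper's proof: order the hyperplanes via Lemma \ref{lem:hyperplane_ordering} so that each $I^k$ introduces exactly one new index $\sigma(k)$, then induct on $k$ using Lemma \ref{lem:factor_values} to show $C^k_k$ is attained at the partner index of $I^k$ (so the algorithm reproduces $p_{\sigma(k)}$) and the demand condition for $\sigma(m)$ to rule out INFEASIBLE. The only difference is that you make explicit the iterative invariant behind the paper's ``without loss of generality'' reordering, which is a welcome clarification rather than a new idea.
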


\begin{proof}
	By Lemma \ref{lem:hyperplane_ordering}, we may assume without loss of generality that the indifference hyperplanes are ordered such that the following holds: $|\{0\} \cap I^1| = 1$ and for all $k > 1$, $|I^k \cap \bigcup_{l=1}^{k-1} I^l|=1$. Set $\{\sigma(1)\} = I^1 \setminus \{0\}$ and $\{\sigma(k)\} = I^k \setminus \bigcup_{l=1}^{k-1} I^l$.
	
	Let be $\bar \p$ the price returned by Algorithm \ref{alg:generate_candidate}. We show by induction that in the $k$-th iteration the algorithm sets $\bar p_{\sigma(k)} = p_{\sigma(k)}$.
	
	For $k=1$, the algorithm sets $\bar p_{\sigma(1)} = b^1_{\sigma(1)}$. The linear equation defined by $(\bbf^1,I^1)$ is $b^1_0 p_{\sigma(1)} = b^1_{\sigma(1)} p_0$. Since $p_0 = b^1_0 = 1$, it follows that also $p_{\sigma(1)} = b^1_{\sigma(1)}$.
	
	Now let $k>1$. The defining equation of $(\bbf^k, I^k)$ is $b^k_{\sigma(k)} p_{\sigma(l)} = b^k_{\sigma(l)} p_{\sigma(k)}$ for some $l < k$, so $p_{\sigma(k)} = p_{\sigma(l)} b^k_{\sigma(k)}/b^k_{\sigma(l)}$. By Lemma \ref{lem:factor_values}, \[C^k_k = \min_{m=0,\dots,k-1} \bar p_{\sigma(m)}/b^k_{\sigma(m)} = \min_{m=0,\dots,k-1} p_{\sigma(m)}/b^k_{\sigma(m)}.\] Since $\p$ is contained in the indifference hyperplane generated by $(\bbf^k,I^k)$, we have that $b^k_{\sigma(l)}/p_{\sigma(l)}$ is maximal among all quotients $b^k_m/p_{\sigma(m)}$. Consequently, $C^k_k = p_{\sigma(l)}/b^k_{\sigma(l)}$ which implies that the algorithm also sets $\bar p_{\sigma(k)} = p_{\sigma(l)} b^k_{\sigma(k)}/b^k_{\sigma(l)}$. Note that the algorithm does not return INFEASIBLE: Since $p \in H(I^m,\mathbf{b}^m)$ for all $m < k$ and $\sigma(m) \in I^m$, we have that $b^m_{\sigma(m)} p_{\sigma(k)} \geq b^m_{\sigma(k)} p_{\sigma(m)}$. This finishes the proof.
\end{proof}

We can now prove our main result.

\begin{proof}[Proof of Theorem \ref{thm:main}]
	By Proposition \ref{prop:price_at_intersection}, there is an optimal price vector at the intersection of $n$ linearly independent indifference hyperplanes, and by Proposition \ref{prop:main}, there is some input $(\bbf^1,\dots,\bbf^n) \in \mathcal{B}^n$ and $\sigma \in S_n$ for Algorithm \ref{alg:generate_candidate} returning that price vector. Note that this input permutation $\sigma \in S_n$ has not been removed from the set $S$ before: otherwise, there would exist a permutation $\sigma'$ with $\sigma'(l) = \sigma(l) \, \forall l=1,\dots,k$ for some $k$ which caused Algorithm \ref{alg:generate_candidate} to return INFEASIBLE in iteration $k$. This however would cause Algorithm \ref{alg:generate_candidate} to also return INFEASIBLE for the permutation $\sigma$.
\end{proof}

\begin{algorithm}[t]
	\label{alg:filtered_prices}
	\KwIn{Set of bids $\mathcal{B}$}
	\KwOut{Set of candidate price vectors $P$}
	Set $P = \emptyset$\;
	\For{$(\bbf^1,\dots,\bbf^n)\in \mathcal{B}^n$}{
		Set $S = S_n$, the set of all permutations\;
		\For{$\sigma \in S$}{
		Execute generate\_candidate with input $(\bbf^1,\dots,\bbf^n)$ and $\sigma$\;
		\eIf{INFEASIBLE}{
			Let $k$ be the iteration in which INFEASIBLE was returned\;
			Remove all $\sigma'$ from $S$ where $\sigma'(l) = \sigma(l)\, \forall l \leq k$\;
		}{
		Set $P = P \cup \{\p\}$\;
		}
	}
	}
	\textbf{return} $P$\;
	\caption{filtered\_prices}
\end{algorithm}

\begin{remark}
	The number of all combinations of $n$-tuples of bids and all permutations is equal to $|\mathcal{B}|^n n!$. While this grows very fast in the number of goods $n$, it is polynomial in the number of bids when the number of goods is fixed. We also may expect that due to the feasibility check in steps \ref{algstep:is_feasible}-\ref{algstep:returninf} of the generate\_candidate algorithm, the practical runtime is reduced significantly: whenever INFEASIBLE is returned in step $k$, we remove $(n-k)!$ permutations from $S$. So in practice we may expect the algorithm to iterate over much less than $|\mathcal{B}|^n n!$ combinations of inputs, and consequently to return the list of candidate prices reasonably fast when the number of different goods is not too large.
\end{remark}
\section{Conclusion}
We have studied an algorithm proposed by Paul Klemperer and DotEcon for finding revenue maximizing prizes in the arctic Product-Mix auction and formally proved correctness of their approach. The algorithm is sufficiently fast when the number of different goods is small. Further research questions include a formal proof of the hardness of the problem, as well as the design of efficient approximation algorithms.

\section*{Acknowledgements}
I would like to thank Edwin Lock from the University of Oxford for his very valuable comments and suggestions.

\newpage

\appendix
\section{Proof of Proposition \ref{prop:price_at_intersection}}
We prove Proposition \ref{prop:price_at_intersection} with the help of two lemmas. Let us define the set $G^{\bbf}(\p)$ of all goods demanded by bidder $\bbf$ at prices $\p$:
\begin{align*}
G^{\bbf}(\p) = \left\{i \in \{0,\dots,n\} \,:\, (\beta(\bbf)/ p_i)\e_i \in D^{\bbf}(\p)\right\} =\\= \left\{ i \in \{0,\dots,n\} \,:\, b_i/p_i \text{ maximal} \right\}.
\end{align*}
The first lemma states that we only need to consider prices at which every good is demanded by some bidder.
\begin{lemma}\label{lem:all_goods_demanded}
	Let $\bar \p \in \R^n_{> 0}$ be arbitrary. Then there exists $\p \in \R^n_{> 0}$ with \\$\bigcup_{\bbf \in \mathcal{B}} G^{\bbf}(\p) \supseteq \{1,\dots,n\}$ and $r(\p) \geq r(\bar \p)$.
\end{lemma}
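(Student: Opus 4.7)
The plan is to handle one undemanded good at a time by lowering its price to the threshold where some bidder becomes indifferent between that good and their current favorites, without perturbing the demand set of any other bidder. Iterating this procedure at most $n$ times will produce the desired $\p$.

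For each bidder $\bbf$ with bid $(b^\bbf_0,\dots,b^\bbf_n)$, I would write $M^\bbf(\p) = \max_{k=0,\dots,n} b^\bbf_k/p_k$, so that $G^\bbf(\p) = \{k : b^\bbf_k/p_k = M^\bbf(\p)\}$. Given $\bar\p$, if there is some $i \in \{1,\dots,n\}$ with $i \notin G^\bbf(\bar\p)$ for every $\bbf$, I would define $\p$ to agree with $\bar\p$ except in coordinate $i$, where
\[
p_i := \max_{\bbf \in \mathcal{B}} \frac{b^\bbf_i}{M^\bbf(\bar\p)}.
\]
The undemanded-ness hypothesis rewrites as $\bar p_i > b^\bbf_i/M^\bbf(\bar\p)$ for every $\bbf$, so $p_i < \bar p_i$; positivity of $p_i$ follows from the standing assumption that $b^\bbf_i > 0$ for at least one $\bbf$, together with $M^\bbf(\bar\p) \geq b^\bbf_0/\bar p_0 = 1$.

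The core verification is that $M^\bbf(\p) = M^\bbf(\bar\p)$ and $G^\bbf(\p) \supseteq G^\bbf(\bar\p)$ for every bidder, and that the bidder $\bbf^*$ attaining the maximum in the definition of $p_i$ now has $i \in G^{\bbf^*}(\p)$. Both follow directly from the construction: only the ratio $b^\bbf_i/p_i$ has changed, and by design it increases to exactly $M^{\bbf^*}(\bar\p)$ for $\bbf^*$ and remains at most $M^\bbf(\bar\p)$ for every other bidder. Using Lemma \ref{lem:demand_set}, these facts upgrade to $D^\bbf(\bar\p) \subseteq D^\bbf(\p)$, because the old vertices $(\beta(\bbf)/\bar p_k)\e_k$ with $k \in G^\bbf(\bar\p)$ still appear as vertices of $D^\bbf(\p)$ (the relevant prices are unchanged).

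For the revenue comparison I would take an optimizer $\{x^\bbf\}$ of problem (\ref{eqn:fixed_p}) at $\bar\p$. Because good $i$ is absent from every demand set at $\bar\p$, Lemma \ref{lem:demand_set} forces $x^\bbf_i = 0$ for every $\bbf$; hence $\langle \p, x^\bbf\rangle = \langle \bar\p, x^\bbf\rangle$, the budget constraints still hold, and $x^\bbf \in D^\bbf(\p)$ by the inclusion just proved. The objective value of this allocation at $\p$ is therefore exactly $r(\bar\p)$, yielding $r(\p) \geq r(\bar\p)$. Iterating the construction on the remaining undemanded goods terminates in at most $n$ rounds, since a previously demanded good remains demanded (its ratio $b^\bbf_k/p_k$ and its bidder's maximum $M^\bbf$ are both unchanged), so each round strictly enlarges $\bigcup_\bbf G^\bbf$. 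I expect the main subtlety to lie in calibrating $p_i$ precisely: setting it any larger leaves good $i$ undemanded, whereas setting it smaller would push $b^\bbf_i/p_i$ above some $M^\bbf(\bar\p)$, displacing previously demanded goods and breaking both the envy-freeness of the old allocation and the identity needed for the revenue comparison.
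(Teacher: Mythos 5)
Your proposal is correct and follows essentially the same route as the paper: lower the price of an undemanded good $i$ to $\max_{\bbf} b^{\bbf}_i/M^{\bbf}(\bar\p)$ (the paper writes this equivalently as $\max_{\bbf}\max_{j\in G^{\bbf}(\bar\p)}(b_j^{\bbf})^{-1}b^{\bbf}_i\bar p_j$), observe that all previously demanded goods remain demanded while $i$ becomes demanded by the maximizing bidder, and reuse the old optimal allocation, which pays the same since $x^{\bbf}_i=0$ throughout. Your explicit treatment of the iteration and of the positivity of the new $p_i$ only makes precise what the paper leaves implicit.
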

\begin{proof}
	Suppose that $k \not\in \bigcup_{\bbf \in \mathcal{B}} G^{\bbf}(\bar \p)$. Then for all bids $\bbf$ and all $i \in G^{\bbf}(\bar \p)$ we have $\bar p_k > (b_k/b_i)\bar p_i$. Define $\p \in \R^n_{> 0}$ by $p_i = \bar p_i$ for $i \neq k$ and $p_k = \max_{\bbf \in \mathcal{B}} \max_{i \in G^{\bbf}(\p)} (b_k/b_i) \bar p_i$. Then we have $b_i/p_i  \geq b_k/ p_k$ for all $\bbf$ and all $i \in G^{\bbf}(\bar \p)$, so $G^{\bbf}(\p) \supseteq G^{\bbf}(\bar \p)$. Moreover, for maximizing bids $\bbf$ and indices $i \in G^{\bbf}(\bar \p)$ in the definition of $p_k$ we have $b_i/ p_i = b_k/ p_k$, so $k \in G^{\bbf}(\p)$. It can easily be seen from Lemma \ref{lem:demand_set} that every feasible solution for $r(\bar \p)$ is also feasible for $r(\p)$. Moreover, any assigned bundle $\x^{\bbf}$ in an optimal solution of $r(\bar \p)$ has $x^{\bbf}_k = 0$, so $\langle \p, \x^{\bbf} \rangle = \langle \bar \p,  \x^{\bbf} \rangle$. This implies $r(\p) \geq r(\bar \p)$.
\end{proof}
The second lemma says that if we increase prices such that the set of demanded goods get larger, the revenue does no decrease.
\begin{lemma}\label{lem:price_order}
	Let $\bar \p, \p \in \R^n_{> 0}$ such that $\p \geq \bar \p$ coefficient-wise and $G^{\bbf}(\p) \supseteq G^{\bbf}(\bar \p)$ for all $\bbf \in \mathcal{B}$. Then $r(\p) \geq r(\bar \p)$.
\end{lemma}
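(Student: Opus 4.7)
The plan is to exploit the hypothesis $G^{\bbf}(\p) \supseteq G^{\bbf}(\bar \p)$ to transport any optimal solution $\{\bar \x^{\bbf}\}_{\bbf \in \mathcal{B}}$ for $r(\bar \p)$ into a feasible solution $\{\x^{\bbf}\}_{\bbf \in \mathcal{B}}$ for $r(\p)$ whose objective value is no smaller. Concretely, I would use Lemma \ref{lem:demand_set} to write each bundle as a convex combination of vertices of the demand set, $\bar \x^{\bbf} = \sum_{i \in G^{\bbf}(\bar \p)} \lambda_i^{\bbf} (\beta(\bbf)/\bar p_i)\e_i$ with $\lambda_i^{\bbf} \geq 0$ and $\sum_i \lambda_i^{\bbf} = 1$, and then define $\x^{\bbf} = \sum_{i \in G^{\bbf}(\bar \p)} \lambda_i^{\bbf} (\beta(\bbf)/p_i)\e_i$, reusing the same convex weights but with the new prices. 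Because $G^{\bbf}(\bar \p) \subseteq G^{\bbf}(\p)$, every vertex used lies in the vertex set of $D^{\bbf}(\p)$ by Lemma \ref{lem:demand_set}, so $\x^{\bbf} \in D^{\bbf}(\p)$ and feasibility is guaranteed.

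Next I would show the two key comparisons. For the payment part, for any $i \in G^{\bbf}(\bar \p) \setminus \{0\}$ we have $\langle \p, (\beta(\bbf)/p_i)\e_i \rangle = \beta(\bbf) = \langle \bar \p, (\beta(\bbf)/\bar p_i)\e_i \rangle$, and for $i=0$ both inner products vanish since $\e_0 = \mathbf{0}$; hence $\langle \p, \x^{\bbf} \rangle = \langle \bar \p, \bar \x^{\bbf} \rangle$ for every bidder. For the cost part, $p_i \geq \bar p_i > 0$ implies $\beta(\bbf)/p_i \leq \beta(\bbf)/\bar p_i$, so $\x^{\bbf} \leq \bar \x^{\bbf}$ coordinate-wise, and summing over bidders gives $\sum_{\bbf} \x^{\bbf} \leq \sum_{\bbf} \bar \x^{\bbf}$. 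Monotonicity of $\psi$ then yields $\psi\bigl(\sum_{\bbf} \x^{\bbf}\bigr) \leq \psi\bigl(\sum_{\bbf} \bar \x^{\bbf}\bigr)$. Combining the equal payments with the weakly smaller cost shows $r(\p) \geq r(\bar \p)$.

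I do not anticipate a serious obstacle: the only subtle bookkeeping is the treatment of the dummy good $i=0$, where $\e_0 = \mathbf{0}$ makes the corresponding summands contribute zero on both sides automatically, so the argument goes through uniformly regardless of whether $0 \in G^{\bbf}(\bar \p)$. Everything else is essentially a direct use of Lemma \ref{lem:demand_set}, the monotonicity assumption on $\psi$, and the homogeneity property that $\langle \p, (\beta(\bbf)/p_i)\e_i\rangle = \beta(\bbf)$ whenever $i \neq 0$.
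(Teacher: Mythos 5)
Your proposal is correct and follows essentially the same argument as the paper: decompose the optimal allocation at $\bar \p$ into convex combinations of demand-set vertices, rescale the vertices to the new prices $\p$, and conclude via equal payments and monotonicity of $\psi$. The only difference is your explicit (and welcome) handling of the dummy good $i=0$, which the paper leaves implicit.
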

\begin{proof}
	Let $\{\bar \x^{\bbf}\}_{\bbf \in \mathcal{B}}$ be an optimal allocation for $r(\bar \p)$. Then each $\bar \x^{\bbf}$ can be written as a convex combination $\bar \x^{\bbf} = \sum_{i \in G^{\bbf}(\bar \p)} \lambda_i (\beta(\bbf)/\bar p_i)\e_i$. Since $G^{\bbf}(\p) \supseteq G^{\bbf}(\bar \p)$, the allocation $\{\x^{\bbf}\}_{\bbf \in \mathcal{B}}$ defined by $\x^{\bbf}  = \sum_{i \in G^{\bbf}(\bar \p)} \lambda_i (\beta(\bbf)/ p_i)\e_i$ is feasible for $r(\p)$, and $\langle  \bar \p, \bar \x^{\bbf} \rangle  = \langle \p, \x^{\bbf} \rangle$ for all $\bbf$. Since $\x^{\bbf} \leq \bar \x^{\bbf}$ for every bid $\bbf$, we have by monotonicity that $\psi(\sum_{\bbf \in \mathcal{B}} \x^{\bbf}) \leq \psi(\sum_{\bbf \in \mathcal{B}} \bar \x^{\bbf})$. It follows that $r(\p) \geq r(\bar \p)$.
\end{proof}
\begin{proof}[Proof of Proposition \ref{prop:price_at_intersection}]
	Let $\bar \p$ be an arbitrary price vector. We demonstrate that there is a price vector $\p^*$ which is the unique intersection point of $n$ linearly independent indifference hyperplanes with $r(\p^*) \geq r(\bar \p)$. This proves the statement. By Lemma \ref{lem:all_goods_demanded} we may assume that $\bigcup_{\bbf \in \mathcal{B}} G^{\bbf}(\bar \p) \supseteq \{1,\dots,n\}$. Define the set $P$ of not too small prices at which bidders demand a superset of goods they demand at $\bar \p$:
	\[
	P = \left\{\p \,:\, G^{\bbf}(\p) \supseteq G^{\bbf}(\bar \p) \, \forall \bbf \in \mathcal{B} \text{ and } p_i \geq \bar p_i /2 \, \forall i \right\}.
	\]
	$P$ is convex as it can be written as
	\[
	P = \left\{\p \, :\, b_i p_j \geq b_j p_i \,\forall \bbf \in B \,\forall i \in G^{\bbf}(\bar \p) \,\forall j\in\{0,...,n\} \text{ and } p_i \geq \bar p_i/2 \, \forall i \right\}.
	\]
	Since each good is $i$ is demanded by some bid $\bbf$ at $\bar \p$, we have that $p_i = b_0 p_i \leq b_i p_0 = b_i$ for some $\bbf$ and all $\p \in P$, so $P$ is compact. Furthermore, it can be checked that for $\p^1, \p^2 \in P$, their coefficient-wise maximum also lies in $P$. It follows that $\p^* \in \R^n_{> 0}$ defined by $p^*_i = \sup_{\p \in P} p_i \geq \bar p_i$ is contained in $P$. Since $\p^*$ is the unique solution to the linear program $\max_{\p \in P}\sum_{i=1}^n p_i$, $n$ linearly independent of the defining inequalities of $P$ are equalities at $\p^*$. As $\p^* \geq \bar \p$, none of the inequalities $p^*_i \geq \bar p_i/2$ can be tight, so there are $n$ linearly independent equalities of the form $b_j p^*_i = b_i p^*_j$ for some $i \neq j$. By definition of $P$, for each of the equalities $b_j p^*_i = b_i p^*_j$, we have that $b_i p^*_k \geq b_k p^*_i$ for all $k$, so $\p^*  \in H(\bbf,\{i,j\})$. It follows that $\p^*$ is the unique intersection point of the $n$ linearly independent indifference hyperplanes corresponding to these equations. By definition of $P$ we have that $G^{\bbf}(\p^*) \supseteq G^{\bbf}(\bar \p)$, so $r(\p^*) \geq r(\bar \p)$ by Lemma \ref{lem:price_order}.
\end{proof}

\end{document}